\documentclass[journal]{IEEEtran}
\ifCLASSINFOpdf

\else

\fi
\usepackage{amsmath}
\usepackage{makeidx}  
\usepackage{algorithm}
\usepackage{algorithmic}
\usepackage{graphicx}
\usepackage{subfigure}
\usepackage{epstopdf}
\usepackage{bm}
\usepackage{cite}
\usepackage{stfloats}

\usepackage{amssymb}
\usepackage{graphicx}

\usepackage{url}

\usepackage{tabularx,booktabs}
\newcolumntype{C}{>{\centering\arraybackslash}X} 
\usepackage{lipsum}

\usepackage{makecell} 

\usepackage{graphicx}
\usepackage{color}
\newtheorem{thm}{Theorem}
\newtheorem{rem}{Remark}
\newtheorem{lem}{Lemma}

\begin{document}

\title{Achievable Rate Region Characterization of Rotatable Antenna-assisted Interference Channel}

\author{Junran Pei, Guangji Chen,
         Qingqing Wu, Qiaoyan Peng, and Wei Cai \vspace{-22pt}
        \thanks{Junran Pei and Guangji Chen are with Nanjing University of Science and Technology, Nanjing 210094, China (email: junranpei@njust.edu.cn; guangjichen@njust.edu.cn). Qingqing Wu is with Shanghai Jiao Tong University, 200240, China (e-mail: qingqingwu@sjtu.edu.cn).
        Qiaoyan Peng is with University of Macau, Macao 999078, China (email: yc27464@umac.mo).
        Wei Cai is with School of Computer Engineering, Nanjing Institute of Technology, Nanjing 211167, China (email: caiw@njit.edu.cn).}}

\maketitle
\vspace{-3pt}
\begin{abstract}
	This paper characterizes the achievable rate region of a rotatable antenna (RA)-assisted multiple-input single-output (MISO) interference channel.
	We exploit the additional spatial degrees of freedom (DoFs) offered by the dynamic orientation of RAs to simultaneously enhance the desired signal and suppress interference, thereby significantly enlarging the achievable rate region of the interference network.
	To this end, we invoke the rate-profile technique to formulate a sum-rate maximization problem for jointly optimizing the transmit beamforming and the array rotation, subject to the practical mechanical constraints of the RAs.
	For the special case of a two-receiver scenario, we introduce the concept of subspace correlation coefficient to derive closed-form solutions for the optimal RA rotation angles and transmit beamforming vectors.
	Inspired by the insights obtained in the two-receiver special case, we propose an efficient two-step optimization framework to obtain a high-quality solution for the general multi-receiver case. In particular, the array rotation angle of each transmitter is optimized to minimize the leakage interference to non-associated receivers. With the optimized array rotation angles, the optimal transmit beamforming is obtained by invoking the second order cone programming based approach.
	Numerical results demonstrate that the proposed solution achieves near-optimal performance and that integrating RAs effectively simplifies transceiver designs by enabling simple maximum ratio transmission (MRT) beamforming with negligible performance loss.
\end{abstract}

\begin{IEEEkeywords}
Rotatable antenna, MISO interference channel, Pareto boundary, transmit beamforming.
\end{IEEEkeywords}

\IEEEpeerreviewmaketitle

\section{Introduction}
\vspace{-2pt}
In current wireless networks, the ever-increasing node density and universal frequency reuse have made co-channel interference management a critical challenge.
The interference channel (IFC), which models a typical scenario where multiple transmitters communicate with their respective receivers simultaneously over the same spectrum, is one of the most classic models in wireless communication systems from an information-theoretic perspective.
For practical and low-complexity implementation, a widely adopted approach in the multiple-input single-output (MISO) IFC is to treat interference as noise at the receivers while employing coordinated transmit beamforming at the transmitters \cite{timotheou2014beamforming}.

\begin{figure}[!t]
	\centering
	\includegraphics[width=0.4\textwidth]{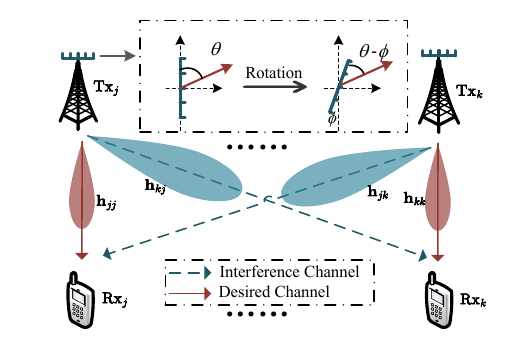}
	\DeclareGraphicsExtensions.
	\caption{System model of the RA-enabled MISO IFC.}
	\label{model}
	\vspace{-16pt}
\end{figure}

However, conventional MISO interference networks predominantly rely on fixed-position antennas (FPAs), where the static spatial deployment may not fully unlock the spatial degrees of freedom (DoFs) due to the randomness of wireless channels. To overcome this inherent limitation, rotatable antenna (RA) has emerged as a new type of flexible antenna system that dynamically adjusts the array orientation to proactively reshape wireless channels \cite{zheng2026rotatable, shao2026tutorial,dai2025demo,dai2026rotatable,zheng2026rotatable_comm,dai2026covert,dai2025secure}. To fully unleash the potential of RA for channel reconfigurations, the joint optimization of transmit beamforming and antenna orientation has attracted growing interest for improving wireless network performance. Recently, RAs have been actively explored in diverse application scenarios, including integrated sensing and communications (ISAC) \cite{chen2026rotatable}, spectrum sharing \cite{peng2026rotatable}, and MISO multiple access channel \cite{wang2024movable}. To suppress co-channel interference in a spectrum sharing system, the authors of \cite{tan2026rotatable} investigated the joint design of RA rotation and transmit beamforming, but the model is limited to a two-receiver topology with a single secondary receiver. Extending to a general multi-receiver setup, the authors of \cite{peng2026rotatable} studied the weighted sum-rate maximization in an RA-assisted spectrum sharing scenario. Although these works \cite{tan2026rotatable, peng2026rotatable} demonstrate the benefits of RAs in terms of interference suppression, the performance gap between the achievable performance and system performance limit remains unclear. It is well known that the achievable rate region is widely adopted to evaluate the performance of IFC from an information theoretic viewpoint \cite{huang2020achievable}.  The achievable rate region of an RA-assisted MISO IFC remains largely unknown, which motivates our work. 

This paper investigates an RA-assisted MISO IFC to characterize its achievable rate region. To gain useful insights into the system design, we first focus on a special case of a two-receiver setup. By introducing the subspace channel correlation coefficient, we analytically derive the condition for achieving the zero subspace channel coefficient, thereby unveiling the potential of RAs in achieving the interference-free upper bound. The subsequent optimal solutions of RA rotation and transmit beamforming are derived in closed-form expressions. Building on these insights, we extend the solution to general multi-receiver scenarios by proposing a two-stage optimization algorithm with low complexity. In this general setup, the rotation of RA at each transmitter is optimized to minimize the leakage interference to non-associated receivers. Given the obtained rotation angles, the optimal transmit beamforming is obtained via the second-order cone programming (SOCP) approach. Simulation results validate the theoretical findings for the two-receiver special case and demonstrate the superiority of the proposed RA-assisted MISO IFC in terms of the system capacity. Moreover, simulation results demonstrate simple maximum ratio transmission (MRT) beamforming is able to approach the interference-free upper bound under the favorable channel conditions created by RAs, which provide system design guidelines for simplifying the transceiver design with negligible performance loss.

\section{System model}
As shown in Fig. 1, we consider an RA-assisted MISO IFC with $K$ transmitter-receiver pairs, where each transmitter equipped with a uniform linear RA array of $N$ transmit antennas simultaneously transmits signals to its associated single-antenna receiver. Without loss of generality, we adopt a two-dimensional (2D) Cartesian coordinate system, where transmitter $j$ ($j \in \mathcal{J} \triangleq\{1, \dots, K\}$) and receiver $k$ ($k \in\mathcal{K} \triangleq \{1, \dots, K\}$) are located at $\mathbf{q}_{\mathrm{tx},j} = [x_{\mathrm{tx},j}, y_{\mathrm{tx},j}]^T$ and $\mathbf{q}_{\mathrm{rx},k} = [x_{\mathrm{rx},k}, y_{\mathrm{rx},k}]^T$, respectively. The distance from transmitter $j$ to receiver $k$ is $d_{kj} = \|\mathbf{q}_{\mathrm{tx},j} - \mathbf{q}_{\mathrm{rx},k}\|$. The transmit steering vector is given by $\mathbf{a}_t(\theta) = \left[1, e^{i \pi \sin \theta}, \dots, e^{i \pi (N-1) \sin \theta}\right]^T$. Assuming all channels are dominated by line-of-sight (LoS) components\footnote{This assumption is justified in unmanned aerial vehicle device-to-device communication networks. In such scenarios, since transceivers operate at certain altitudes, the LoS path dominates while multipath scattering is negligible.}, the wireless channel from transmitter $j$ to receiver $k$ is given by
\begin{align}\label{channel_model}
	\mathbf{h}_{kj}(\phi_j) = \sqrt{\beta_0} \frac{e^{-i \frac{2\pi}{\lambda} d_{kj}}}{d_{kj}} \mathbf{a}_{\mathrm{t}}(\theta_{kj} - \phi_j),
\end{align}
where $\beta_0$ represents the channel power gain at a reference distance of 1 meter (m), $\lambda$ is the carrier wavelength. $\theta_{kj}$ denotes the spatial angle between transmitter $j$ and receiver $k$, and $\phi_j$ denotes the rotation angle of the antenna array at transmitter $j$.

Let $\mathbf{w}_j\in\mathbb{C}^{N\times1}$ denote the transmit beamforming vector of transmitter $j$. The transmit power of each transmitter is constrained by $\|\mathbf{w}_j\|^2 \le P_{\max}$, where $P_{\max}$ denotes the maximum allowed transmit power. The baseband signal received at receiver $k$ can be expressed as
\begin{align}\label{received_signal}
	y_k &= \sum_{j=1}^{K}\mathbf{h}_{kj}^H(\phi_j)\mathbf{w}_js_j+z_k \nonumber \
	\\&= \mathbf{h}_{kk}^H(\phi_k)\mathbf{w}_ks_k+\sum_{j\neq k}\mathbf{h}_{kj}^H(\phi_j)\mathbf{w}_j\,s_j+z_k, \forall k \in \mathcal{K},
\end{align}
where $s_k$ denotes the Gaussian information-bearing symbol for receiver $k$ with normalized power, $z_k$ is the additive white Gaussian noise (AWGN) at receiver $k$ with $z_{k}\sim\mathcal{C N}\left(0,\sigma_{k}^{2}\right)$. By treating interference as noise, the resulting signal-to-interference-plus-noise ratio (SINR) at receiver $k$ is given by
\begin{equation}
	\gamma_k=\frac{|\mathbf{h}_{kk}^H(\phi_k)\mathbf{w}_k|^2}{\sum_{j\neq k}|\mathbf{h}_{kj}^H(\phi_j)\mathbf{w}_j|^2+\sigma_k^2},\forall k \in \mathcal{K}.
\end{equation}
Accordingly, the achievable rate of receiver $k$ is
\begin{align}\label{Rate_k_definition}
	R_k &= \log_2(1 + \gamma_k) \nonumber \\
	&= \log_2 \left( 1 + \frac{|\mathbf{h}_{kk}^H(\phi_k) \mathbf{w}_k|^2}{\sum_{j \neq k} |\mathbf{h}_{kj}^H(\phi_j) \mathbf{w}_j|^2 + \sigma_k^2} \right), \forall k \in \mathcal{K}.
\end{align}

The achievable rate region for the IFC is the set of all rate-tuples $(R_{1}, ..., R_{k})$ for the $K$ transmitter-receiver pairs that can be simultaneously achieved, which can be written as
\begin{align}\label{Rate_Region}
\mathcal{R} = \bigcup_{\substack{\|\mathbf{w}_j\|^2 \le P_{\max} ,0 \le \phi_j < \pi \\ j \in  \mathcal{J}}} \{ (R_1,R_2, \dots, R_K) \}.
\end{align}
The outer boundary of $\mathcal{R}$ is called the Pareto boundary, which consists of all the rate-tuples for which it is impossible to increase one receiver's rate without simultaneously decreasing that of any other receiver \cite{huang2020achievable}. To characterize the Pareto-optimal rate tuples of the considered RA assisted MISO IFC, we invoke the technique of rate-profile. Specifically, any rate-tuple on the Pareto boundary of $\mathcal{R}$ can be obtained by optimally solving the following optimization problem with a given rate-profile ratio weight vector $\boldsymbol{\zeta}=(\zeta_1,\dots,\zeta_K)$:
\begin{subequations}\label{P1}
\begin{align}
\label{P1-a}\max_{R,{\{\phi_j\}}_{j=1}^K,{\{\mathbf{w}_j\}}_{j = 1}^K} \;\; & R \\
\label{P1-b}\text{s.t.} \;\;\;\;\;\;\;\;\;\;\;\; &R_k \ge \zeta_k R, \forall k \in \mathcal{K}, \\
\label{P1-c}&  \|\mathbf{w}_j\|^2 \le P_{\max}, \forall j \in \mathcal{J}, \\
\label{P1-d}& 0 \le \phi_j < \pi, \forall j \in \mathcal{J},
\end{align}
\end{subequations}
where $\zeta_k\ge0$ denotes the target rate ratio between the achievable rate of receiver $k$ and the sum rate $R$, with $\sum_{k=1}^K \zeta_k = 1$. By solving problem \eqref{P1} with all possible values of $\boldsymbol{\zeta}$, the complete Pareto boundary of the achievable rate region $\mathcal{R}$ can be characterized. For problem \eqref{P1}, constraint \eqref{P1-b} specifies the rate-profile requirement, ensuring that the achievable rate of each receiver $k$ meets the target rate ratio $\zeta_k$ relative to $R$. Constraint \eqref{P1-c} imposes the per-transmitter power constraint, ensuring that the transmit power of each transmitter $j$ does not exceed the maximum power budget $P_{\max}$. Constraint \eqref{P1-d} restricts the antenna rotation angle $\phi_j$ at each transmitter to $[0, \pi)$.

Different from conventional MISO IFC, our formulation involves a new design variable, namely the antenna rotation angles $\{\phi_j\}$. This inherently offers a new dimension to enhance the achievable rate by deliberately shaping the spatial interference subspace. However, directly solving the joint optimization of beamforming and rotation is highly challenging. In particular, the antenna rotation angle affects the channel vectors in a complicated, non-linear manner. Hence, there is no standard method to solve problem \eqref{P1} optimally.

\section{Proposed Solution}

In the following, we first focus on a special case of the two-receiver scenario, where the optimal solution is derived in closed form, providing useful insights for the general setup. Inspired by these insights, we propose an efficient algorithm to obtain a high-quality solution of problem \eqref{P1} for the general multi-receiver scenario.

\subsection{Special Case of Two-receiver Setup}
For the two-receiver special case, we derive the optimal solution to unveil the additional DoFs offered by RAs for achieving the system interference-free upper bound.
To this end, we first establish an upper bound for the achievable rates.

\begin{lem}\label{lemma_upper_bound}
	The achievable rate of receiver $k$ satisfies
	\begin{align}\label{eq_upper_bound}
		R_k \le \log_2 \left( 1 + \frac{P_{\max} N \beta_0}{d_{kk}^2 \sigma_k^2} \right), \quad \forall k \in \{1, 2\},
	\end{align}
	where the equality holds if and only if $\mathbf{h}_{kk}^H(\phi_k) \mathbf{h}_{kj}(\phi_j) = 0$ for $j \neq k$.
\end{lem}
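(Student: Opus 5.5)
The plan is to bound the SINR $\gamma_k$ from above in two elementary steps, dropping the interference and then applying Cauchy--Schwarz, and afterwards to track when each step is tight.

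\textbf{Step 1 (dropping interference).} Since every interference term $|\mathbf{h}_{kj}^H(\phi_j)\mathbf{w}_j|^2$ is nonnegative, we have $\gamma_k \le |\mathbf{h}_{kk}^H(\phi_k)\mathbf{w}_k|^2/\sigma_k^2$. Equality holds iff $\mathbf{h}_{kj}^H(\phi_j)\mathbf{w}_j=0$ for $j\neq k$.

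\textbf{Step 2 (Cauchy--Schwarz and the channel norm).} Cauchy--Schwarz gives $|\mathbf{h}_{kk}^H\mathbf{w}_k|^2\le \|\mathbf{h}_{kk}(\phi_k)\|^2\|\mathbf{w}_k\|^2\le \|\mathbf{h}_{kk}(\phi_k)\|^2 P_{\max}$, using \eqref{P1-c}. From \eqref{channel_model}, every entry of $\mathbf{a}_t(\cdot)$ has unit modulus and the phase factor $e^{-i2\pi d_{kk}/\lambda}$ has unit modulus. Hence $\|\mathbf{h}_{kk}(\phi_k)\|^2=N\beta_0/d_{kk}^2$ for every rotation $\phi_k$. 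The RA rotation therefore cannot increase the direct-link gain; it only reshapes the directions of the channels. Combining the two steps and using that $\log_2(1+x)$ is increasing yields \eqref{eq_upper_bound}.

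\textbf{Equality.} This is the only delicate part, since one must pin down what ``iff'' refers to. I would argue that Step 2 is tight iff $\mathbf{w}_k$ is the full-power MRT vector $\sqrt{P_{\max}}\,\mathbf{h}_{kk}/\|\mathbf{h}_{kk}\|$ up to a phase. Step 1 is tight iff the leakage into receiver $k$ vanishes. In the two-receiver setting, the bound is to be attained for both receivers simultaneously, which is the interference-free upper bound the section targets. Then both transmitters must use MRT, so the leakage condition becomes orthogonality between the MRT direction of the interfering transmitter and the cross channel it sees toward the unintended receiver. The direct and cross channels entering this condition are both scaled steering vectors of the form \eqref{channel_model}. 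The condition therefore reduces to the vanishing of the inner product of two steering vectors, $\mathbf{a}_t^H(\cdot)\mathbf{a}_t(\cdot)=0$. Up to the nonzero scalars $\sqrt{\beta_0}/d$ and unit-modulus phases, this is exactly the stated condition $\mathbf{h}_{kk}^H(\phi_k)\mathbf{h}_{kj}(\phi_j)=0$, read with the paper's index convention for the pair of channels involved.

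For the converse, I would assume the orthogonality holds and exhibit the MRT beamformers. Direct substitution then shows that both inequalities are tight for each $k\in\{1,2\}$.

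The main obstacle is bookkeeping the indices in the equality condition. The bound itself is routine, but the leakage term involves the channel from transmitter $j$ at receiver $k$ together with the beamformer $\mathbf{w}_j$, which under MRT is aligned with transmitter $j$'s own direct channel. I would write this condition out explicitly and check that it matches the stated one, relying on the symmetric structure of the two-user case.
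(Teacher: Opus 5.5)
Your proof of the inequality follows the paper's route exactly: drop the nonnegative interference, apply Cauchy--Schwarz with $\|\mathbf{w}_k\|^2\le P_{\max}$, and use that $\|\mathbf{h}_{kk}(\phi_k)\|^2=N\beta_0/d_{kk}^2$ does not depend on $\phi_k$. Your equality analysis is more careful than the paper's, whose proof only records the per-step conditions (zero leakage and full-power MRT). You correctly note that an ``iff'' in terms of channels alone makes sense only if both bounds are attained simultaneously, since otherwise $\mathbf{w}_j=\mathbf{0}$ also removes the leakage. Simultaneous attainment forces full-power MRT at both transmitters, so the condition becomes $\mathbf{h}_{kj}^H(\phi_j)\mathbf{h}_{jj}(\phi_j)=0$ for $j\neq k$.

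The step that would not go through is your final claim that this is ``exactly'' the printed condition. Up to nonzero scalars, $\mathbf{h}_{kk}^H(\phi_k)\mathbf{h}_{kj}(\phi_j)$ equals $\mathbf{a}_t^H(\theta_{kk}-\phi_k)\mathbf{a}_t(\theta_{kj}-\phi_j)$. This vanishes iff $\sin(\theta_{kj}-\phi_j)-\sin(\theta_{kk}-\phi_k)=2m/N$ for some $m\in\{\pm1,\dots,\pm(N-1)\}$. The leakage product, by contrast, vanishes iff $\sin(\theta_{kj}-\phi_j)-\sin(\theta_{jj}-\phi_j)=2m/N$. The first condition involves $\phi_k$ and $\theta_{kk}$ and the second does not, and swapping the two users does not turn one into the other. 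Physically, the printed product pairs channels from two different transmit arrays, which never meet in an inner product at a single-antenna receiver.

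So the ``check that it matches'' you plan, relying on two-user symmetry, would fail. Instead, state the equality condition as the one you derived, and read the printed one as a misindexing of $\mathbf{h}_{jj}^H(\phi_j)\mathbf{h}_{kj}(\phi_j)=0$. That is also the form the paper actually uses afterwards, in the definition of $\rho_{kj}^2(\phi_j)$ and in the proof of Theorem~1.
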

\begin{IEEEproof}
	From \eqref{Rate_k_definition}, $R_k$ is bounded by
	\begin{align}
		R_k &= \log_2 \left( 1 + \frac{|\mathbf{h}_{kk}^H(\phi_k) \mathbf{w}_k|^2}{\sum_{j \neq k} |\mathbf{h}_{kj}^H(\phi_j) \mathbf{w}_j|^2 + \sigma_k^2} \right) \nonumber \\
		&\le \log_2 \left( 1 + \frac{|\mathbf{h}_{kk}^H(\phi_k) \mathbf{w}_k|^2}{\sigma_k^2} \right) \label{eq_proof_ineq1} \\
		&\le \log_2 \left( 1 + \frac{P_{\max} \|\mathbf{h}_{kk}(\phi_k)\|^2}{\sigma_k^2} \right) \label{eq_proof_ineq2} \\
		&= \log_2 \left( 1 + \frac{P_{\max} N \beta_0}{d_{kk}^2 \sigma_k^2} \right). \label{eq_proof_eq3}
	\end{align}
	The equality in \eqref{eq_proof_ineq1} holds if and only if $\sum_{j \neq k} |\mathbf{h}_{kj}^H(\phi_j) \mathbf{w}_j|^2 = 0$. The equality in \eqref{eq_proof_ineq2} holds if and only if the transmit beamforming is set to $\mathbf{w}_k = \sqrt{P_{\max}} \frac{\mathbf{h}_{kk}(\phi_k)}{\|\mathbf{h}_{kk}(\phi_k)\|}$. The equality in \eqref{eq_proof_eq3} holds because $\|\mathbf{h}_{kk}(\phi_k)\|^2 = \frac{N \beta_0}{d_{kk}^2}$, which is determined solely by the path loss and array size, and is independent of the rotation angle $\phi_k$. 
\end{IEEEproof}

Lemma \ref{lemma_upper_bound} indicates that achieving the rate upper bound requires complete spatial orthogonality between the desired channel and the interference channel. To quantitatively measure this condition, we define the subspace correlation coefficient as $\rho_{kj}^2(\phi_j) = \frac{|\mathbf{h}_{kj}^H(\phi_j) \mathbf{h}_{jj}(\phi_j)|^2}{\|\mathbf{h}_{kj}(\phi_j)\|^2 \|\mathbf{h}_{jj}(\phi_j)\|^2}$. Then, we provide a closed-form expression of $\rho_{kj}^2(\phi_j)$ in the following lemma.

\begin{lem}\label{lemma_correlation}
	With $\Delta_{kj} \triangleq \sin(\theta_{kj} - \phi_j) - \sin(\theta_{jj} - \phi_j)$, the correlation coefficient $\rho_{kj}^2(\phi_j) \in [0,1]$ is given by
	\begin{align}\label{eq_correlation}
		\rho_{kj}^2(\phi_j) = \frac{1}{N^2} \left| \frac{\sin\left( \frac{N \pi}{2} \Delta_{kj} \right)}{\sin\left( \frac{\pi}{2} \Delta_{kj} \right)} \right|^2.
	\end{align}
\end{lem}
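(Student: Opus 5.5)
The plan is to compute the inner product $\mathbf{h}_{kj}^H(\phi_j)\mathbf{h}_{jj}(\phi_j)$ explicitly from the channel model \eqref{channel_model} and then normalize it.

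First, I will pull out the scalar factors. The path-loss and phase terms give
\begin{align}
\mathbf{h}_{kj}^H(\phi_j)\mathbf{h}_{jj}(\phi_j)=\frac{\beta_0 e^{i\frac{2\pi}{\lambda}(d_{kj}-d_{jj})}}{d_{kj}d_{jj}}\,\mathbf{a}_t^H(\theta_{kj}-\phi_j)\mathbf{a}_t(\theta_{jj}-\phi_j).
\end{align}
The unit-modulus phase drops out under $|\cdot|^2$. As already used in the proof of Lemma \ref{lemma_upper_bound}, $\|\mathbf{h}_{kj}(\phi_j)\|^2=N\beta_0/d_{kj}^2$ and $\|\mathbf{h}_{jj}(\phi_j)\|^2=N\beta_0/d_{jj}^2$. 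Hence the factors $\beta_0^2/(d_{kj}^2d_{jj}^2)$ cancel, and
\begin{align}
\rho_{kj}^2(\phi_j)=\frac{1}{N^2}\left|\mathbf{a}_t^H(\theta_{kj}-\phi_j)\mathbf{a}_t(\theta_{jj}-\phi_j)\right|^2.
\end{align}

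Next, I will expand the steering-vector inner product entrywise. This gives the geometric sum $\sum_{n=0}^{N-1}e^{-i\pi n\Delta_{kj}}$, with $\Delta_{kj}$ as defined in the lemma. Summing the geometric series and factoring out half-angle phases, $e^{-i\pi N\Delta_{kj}/2}$ from the numerator and $e^{-i\pi\Delta_{kj}/2}$ from the denominator, turns $1-e^{-i\pi N\Delta}$ into $2i\,e^{-i\pi N\Delta/2}\sin(N\pi\Delta/2)$, and similarly for the denominator. The modulus is then $|\sin(N\pi\Delta_{kj}/2)/\sin(\pi\Delta_{kj}/2)|$, which yields \eqref{eq_correlation}.

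The one point that needs care is the degenerate case where $\sin(\pi\Delta_{kj}/2)=0$. Since $\Delta_{kj}\in[-2,2]$, this occurs exactly when $\Delta_{kj}\in\{0,\pm2\}$. In that case every term of the sum has modulus one with a common phase, so the sum has modulus $N$. The formula is then read as its limiting value, which equals $N^2$ in the numerator and gives $\rho_{kj}^2=1$. Finally, the range $\rho_{kj}^2\in[0,1]$ follows directly from the Cauchy--Schwarz inequality applied to the definition. Equivalently, the modulus of the sum is at most $\sum_n 1=N$ by the triangle inequality.
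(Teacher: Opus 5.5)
Your proposal is correct and matches the paper's proof: you cancel the path-loss factors against the norms, reduce to $\frac{1}{N^2}\left|\sum_{n=0}^{N-1} e^{-in\pi\Delta_{kj}}\right|^2$, then sum the geometric series and extract the half-angle phases. Your treatment of the removable singularity at $\Delta_{kj}\in\{0,\pm 2\}$ and the Cauchy--Schwarz argument for $\rho_{kj}^2\in[0,1]$ are useful additions that the paper leaves implicit.
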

\begin{IEEEproof}
	From \eqref{channel_model}, $\rho_{kj}^2(\phi_j)$ can be expressed as
	\begin{align}
		\rho_{kj}^2(\phi_j) &= \frac{|\mathbf{h}_{kj}^H(\phi_j) \mathbf{h}_{jj}(\phi_j)|^2}{\|\mathbf{h}_{kj}(\phi_j)\|^2 \|\mathbf{h}_{jj}(\phi_j)\|^2} \nonumber \\
		&= \frac{\left| \frac{\beta_0}{d_{kj}d_{jj}} \mathbf{a}_{\mathrm{t}}^H(\theta_{kj} - \phi_j) \mathbf{a}_{\mathrm{t}}(\theta_{jj} - \phi_j) \right|^2}{\frac{N \beta_0}{d_{kj}^2} \cdot \frac{N \beta_0}{d_{jj}^2}} \nonumber \\
		&= \frac{1}{N^2} \left| \sum\nolimits_{n=0}^{N-1} e^{-i n \pi \Delta_{kj}} \right|^2 \nonumber \\
		&\overset{(a)}{=} \frac{1}{N^2} \left| \frac{\sin\left( \frac{N \pi}{2} \Delta_{kj} \right)}{\sin\left( \frac{\pi}{2} \Delta_{kj} \right)} \right|^2,
	\end{align}
	where $(a)$ follows by applying the geometric series sum formula $\sum_{n=0}^{N-1} q^n = \frac{1-q^N}{1-q}$ and Euler's identity $\sin(x) =~\frac{e^{ix} - e^{-ix}}{2i}$.
\end{IEEEproof}

Applying the results of Lemma \ref{lemma_upper_bound} and Lemma \ref{lemma_correlation}, we obtain the optimal rotation angles in the following theorem.
\begin{thm}\label{theorem_optimal_angle}
	For $K = 2$, the optimal rotation angles $\phi_j^*$ for problem \eqref{P1} are 
	\begin{align}\label{eq_optimal_angle}
		\phi_j^* = & \frac{\theta_{kj} + \theta_{jj}}{2}\nonumber \\
		& \pm \arccos \left[ \frac{m}{N \sin\left( \frac{\theta_{kj} - \theta_{jj}}{2} \right)} \right], \forall j \in \{1, 2\},
	\end{align}
	provided that $\left| \sin\left( \frac{\theta_{jj} - \theta_{kj}}{2} \right) \right| \ge \frac{|m|}{N}$, where $m \in \mathbb{Z}^+$.
\end{thm}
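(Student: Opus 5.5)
Given the upper bound in Lemma \ref{lemma_upper_bound}, the plan is to show that the stated angles make both receivers simultaneously interference-free. For $K=2$, the rate-profile problem \eqref{P1} is then solved optimally for every $\boldsymbol{\zeta}$, and the whole Pareto boundary collapses to the box defined by \eqref{eq_upper_bound}.

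\textbf{Step 1: reduce to zero correlation.} The interference at receiver $k$ is $|\mathbf{h}_{kj}^H(\phi_j)\mathbf{w}_j|^2$, and the bound for receiver $j$ is met only with MRT, $\mathbf{w}_j=\sqrt{P_{\max}}\,\mathbf{h}_{jj}(\phi_j)/\|\mathbf{h}_{jj}(\phi_j)\|$. With this choice the interference vanishes exactly when $\rho_{kj}^2(\phi_j)=0$. For $K=2$ the two conditions decouple: $\phi_1$ affects only $\rho_{21}$ and $\phi_2$ affects only $\rho_{12}$. So if each $\rho_{kj}^2(\phi_j)=0$ can be made to hold, both rates reach \eqref{eq_upper_bound} at the same time. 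Since no feasible point can exceed these bounds (Lemma \ref{lemma_upper_bound}), the tuple $(R_1^{\max},R_2^{\max})$ dominates the region. Hence these angles together with MRT are optimal for \eqref{P1} for any $\boldsymbol{\zeta}$. A minor point is that a larger $R$ is infeasible for any $\zeta_k>0$, and the case $\zeta_k=0$ is trivial.

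\textbf{Step 2: solve $\rho_{kj}^2=0$.} By Lemma \ref{lemma_correlation}, $\rho_{kj}^2=0$ holds iff $\sin(\tfrac{N\pi}{2}\Delta_{kj})=0$ while $\sin(\tfrac{\pi}{2}\Delta_{kj})\neq0$. Since $|\Delta_{kj}|\le2$, this is equivalent to $\Delta_{kj}=2m/N$ for a nonzero integer $m$ with $|m|<N$.

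Next, apply the sum-to-product identity:
\begin{align}
\Delta_{kj}=2\cos\!\Big(\tfrac{\theta_{kj}+\theta_{jj}}{2}-\phi_j\Big)\sin\!\Big(\tfrac{\theta_{kj}-\theta_{jj}}{2}\Big).
\end{align}
The condition therefore becomes $\cos\big(\tfrac{\theta_{kj}+\theta_{jj}}{2}-\phi_j\big)=\frac{m}{N\sin((\theta_{kj}-\theta_{jj})/2)}$. Inverting the cosine gives the two branches $\pm\arccos[\cdot]$ in \eqref{eq_optimal_angle}. A solution exists iff the argument lies in $[-1,1]$, i.e. $|\sin(\tfrac{\theta_{jj}-\theta_{kj}}{2})|\ge |m|/N$, which is exactly the stated proviso. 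Any solution can then be reduced modulo $\pi$ into $[0,\pi)$ to satisfy \eqref{P1-d}.

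\textbf{Expected obstacle.} The main care point is the bookkeeping rather than the optimality argument. Two things need checking: that the mod-$\pi$ reduction preserves the zero (this is plausible, since the relevant quantity depends on $\phi_j$ through $\cos(\cdot)$, and the shift changes the sign, which maps $m\to-m$ and is harmless), and that the excluded case $\sin(\tfrac{\pi}{2}\Delta)=0$, i.e. $m$ a multiple of $N$, is ruled out. I would handle the second by restricting to $m\in\mathbb{Z}^+$ with $m<N$, absorbing sign into the $\pm$ branch or the $\pi$-periodicity.
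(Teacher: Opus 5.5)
Your proposal is correct and follows the same route as the paper's proof: impose $\Delta_{kj}=2m/N$ via Lemma~\ref{lemma_correlation}, apply the sum-to-product identity, and invert the cosine. Your extra steps fill in what the paper leaves implicit. These are the dominance argument via Lemma~\ref{lemma_upper_bound} showing that these angles with MRT solve \eqref{P1} for every $\boldsymbol{\zeta}$, the $\pi$-periodicity check for \eqref{P1-d}, and the exclusion of $m$ divisible by $N$; the last rules out the degenerate case $m=N$ (where $\Delta_{kj}=2$ and in fact $\rho_{kj}^2=1$), which the paper's proviso technically admits.
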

\begin{IEEEproof}
	Using the result in Lemma \ref{lemma_correlation}, $\sin(\theta_{kj} - \phi_j) - \sin(\theta_{jj} - \phi_j) = \frac{2m}{N}$. Applying the identity $\sin A - \sin B = 2\cos \left( \frac{A + B}{2} \right)\sin \left( \frac{A - B}{2} \right)$, this condition simplifies to
	\begin{align}\label{eq_simplified_condition}
		2\cos \left( \frac{\theta_{kj} + \theta_{jj}}{2} - \phi_j \right)\sin \left( \frac{\theta_{kj} - \theta_{jj}}{2} \right) &= \frac{2m}{N},
	\end{align}
	which leads to the closed-form optimal solution in \eqref{eq_optimal_angle}.
\end{IEEEproof}

	From Theorem \ref{theorem_optimal_angle}, $\left| \sin\left( \frac{\theta_{kj} - \theta_{jj}}{2} \right) \right| \ge \frac{1}{N}$ serves as the condition for achieving perfect orthogonality between the desired channel and interference channel. Physically, it specifies the minimum spatial angle difference required between the desired and interfering receivers to enable complete interference cancellation without loss of desired signal power. Furthermore, this condition can be easily satisfied in practical systems since the receiver is generally associated with its near transmitter.

	With the optimal rotation angles $\phi_j^*$ obtained in Theorem \ref{theorem_optimal_angle}, the optimal transmit beamforming at each transmitter reduces to the simple MRT as
	\begin{align}
		\mathbf{w}_j^* = \sqrt{P_{\max}} \frac{\mathbf{h}_{jj}(\phi_j^*)}{\|\mathbf{h}_{jj}(\phi_j^*)\|}, \quad \forall j \in \mathcal{J}.
	\end{align}
	
\begin{rem}\label{remark_MRT_optimality}
	In conventional MISO IFCs, treating interference as noise with Gaussian signaling is strictly suboptimal. In contrast, exploiting RAs provides additional degrees of freedom to optimize array rotation angles. This yields ideal spatial orthogonality between the desired channel and the interference channel, allowing MRT to perfectly cancel interference without losing desired signal power. In this case, Gaussian signaling with MRT beamforming becomes optimal from an information-theoretical viewpoint.
\end{rem}

\subsection{General Multi-receiver Scenario}
Next, we study the general case with $K$ transmitter-receiver pairs. We first provide an upper bound on the performance gap between the system performance under arbitrary rotation angles and the interference-free bound. Inspired by this gap analysis, we then minimize the interference leakage power by optimizing the rotation angles of antenna arrays. Finally, the optimal transmit beamforming is obtained under the optimized rotation angles.

\subsubsection{Rate Gap Analysis}
To provide theoretical insights into the antenna orientation design, we first quantitatively analyze the achievable rate gap between the system performance under arbitrarily rotation angles and the ideal interference-free upper bound.
Let $R_k(\{\phi_j\})$ denote the achievable rate of receiver $k$ under arbitrary rotation angles $\{\phi_j\}$, and let $R_k^{\text{up}}$ denote the interference-free rate upper bound for receiver $k$. Accordingly, the resulting rate gap is $\Delta R_k =~R_k^{\text{up}} - R_k(\{\phi_j\})$. We provide an upper bound on $\Delta R_k$ in the following theorem.

\begin{thm}\label{theorem_rate_gap}
	For any receiver $k$, the rate gap $\Delta R_k$ is upper-bounded by
	\begin{align}\label{eq_rate_gap_bound}
		\Delta R_k \le \log_2 \left( 1 + \frac{P_{\max} N \beta_0}{\sigma_k^2} \sum\nolimits_{j \neq k} \frac{\rho_{kj}^2(\phi_j)}{d_{kj}^2} \right),
	\end{align}
	where $\rho_{kj}^2(\phi_j) = \frac{|\mathbf{h}_{kj}^H(\phi_j) \mathbf{h}_{jj}(\phi_j)|^2}{\|\mathbf{h}_{kj}(\phi_j)\|^2 \|\mathbf{h}_{jj}(\phi_j)\|^2}$.
\end{thm}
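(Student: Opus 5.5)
The bound will follow by evaluating $R_k(\{\phi_j\})$ at one specific beamforming choice, namely MRT at every transmitter, $\mathbf{w}_j=\sqrt{P_{\max}}\,\mathbf{h}_{jj}(\phi_j)/\|\mathbf{h}_{jj}(\phi_j)\|$. Here we interpret $R_k(\{\phi_j\})$ as the rate attained under the given angles with a feasible beamformer. MRT is feasible, so it gives a lower bound on any optimized $R_k$, and the gap only grows if we evaluate at MRT. By Lemma \ref{lemma_upper_bound}, $R_k^{\text{up}}=\log_2(1+S_k)$ with $S_k\triangleq P_{\max}N\beta_0/(d_{kk}^2\sigma_k^2)$. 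Under MRT the desired term equals $P_{\max}\|\mathbf{h}_{kk}\|^2=P_{\max}N\beta_0/d_{kk}^2$, independent of $\phi_k$.

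The first step is to rewrite each interference term through the subspace correlation coefficient:
\begin{align}
|\mathbf{h}_{kj}^H(\phi_j)\mathbf{w}_j|^2=P_{\max}\frac{|\mathbf{h}_{kj}^H\mathbf{h}_{jj}|^2}{\|\mathbf{h}_{jj}\|^2}=P_{\max}\|\mathbf{h}_{kj}\|^2\rho_{kj}^2(\phi_j)=\frac{P_{\max}N\beta_0}{d_{kj}^2}\rho_{kj}^2(\phi_j).
\end{align}
Hence, with $I_k\triangleq \frac{P_{\max}N\beta_0}{\sigma_k^2}\sum_{j\neq k}\rho_{kj}^2(\phi_j)/d_{kj}^2$ denoting the normalized interference, we get $\gamma_k^{\rm MRT}=S_k/(1+I_k)$.

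The second step is the algebra
\begin{align}
\Delta R_k\le\log_2\frac{1+S_k}{1+\frac{S_k}{1+I_k}}=\log_2\frac{(1+S_k)(1+I_k)}{1+I_k+S_k}.
\end{align}
To finish, it suffices to show $(1+S_k)(1+I_k)\le(1+I_k)(1+I_k+S_k)$. This reduces to $1+S_k\le 1+I_k+S_k$, i.e. $I_k\ge0$, which holds trivially. The result is exactly $\Delta R_k\le\log_2(1+I_k)$, which is \eqref{eq_rate_gap_bound}.

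The main point requiring care is the interpretation of $R_k(\{\phi_j\})$. If it denotes the rate under an arbitrary (not necessarily MRT) beamformer, the bound can fail, e.g. for $\mathbf{w}_k=\mathbf{0}$. So I would state explicitly that $R_k(\{\phi_j\})$ refers to the rate achieved under the rotation angles with the beamformers optimized, or at least with MRT, and note that optimizing $\mathbf{w}$ only tightens the bound. The remaining computations are routine, since $\|\mathbf{h}_{kj}\|^2=N\beta_0/d_{kj}^2$ holds for all angles, as already used in Lemma \ref{lemma_upper_bound}.
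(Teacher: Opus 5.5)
Your proof is correct and is essentially the paper's proof. Both lower-bound $R_k(\{\phi_j\})$ by the all-MRT rate, write each leakage term as $P_{\max}N\beta_0\rho_{kj}^2(\phi_j)/d_{kj}^2$, and discard the factor $(1+S_k)/(1+S_k+I_k)\le 1$ using $I_k\ge 0$; your $I_k$ is just the paper's divided by $\sigma_k^2$.

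Your caveat about the meaning of $R_k(\{\phi_j\})$ is well taken, and the paper leaves it implicit. One refinement: beamformers that are jointly optimized (e.g., for a rate profile or max-min) need not give user $k$ at least its MRT rate, so the statement is cleanest when $R_k(\{\phi_j\})$ is read as the MRT rate itself.
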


\begin{IEEEproof}
	For a given receiver $k$, the ideal interference-free rate upper bound $R_k^{\text{up}}$ is given by
	\begin{equation}
		R_k^{\text{up}} = \log_2 \left( 1 + \frac{P_{\max} N \beta_0}{d_{kk}^2 \sigma_k^2} \right).
	\end{equation}
	
	For arbitrary rotation angles $\{\phi_j\}$, the achieved rate $R_k(\{\phi_j\})$ admits a lower bound when MRT beamforming is adopted. Specifically, substituting the MRT beamforming vector $\mathbf{w}_j^{\text{MRT}} = \sqrt{P_{\max}} \frac{\mathbf{h}_{jj}(\phi_j)}{\|\mathbf{h}_{jj}(\phi_j)\|}$ into the interference term yields the instantaneous interference power leaked from transmitter $j$ to receiver $k$ as
	\begin{align}
		I_{kj} &= |\mathbf{h}_{kj}^H(\phi_j) \mathbf{w}_j^{\text{MRT}}|^2 \nonumber \\
		&= P_{\max} \frac{|\mathbf{h}_{kj}^H(\phi_j) \mathbf{h}_{jj}(\phi_j)|^2}{\|\mathbf{h}_{jj}(\phi_j)\|^2} = P_{\max} \frac{N \beta_0}{d_{kj}^2} \rho_{kj}^2(\phi_j).
	\end{align}
	Let $I_k = \sum\nolimits_{j \neq k} I_{kj}$ denote the total interference leakage under MRT beamforming. Then, a lower bound on the achievable rate, denoted as $R_k^{\text{MRT}}(\{\phi_j\})$, is given by
	\begin{equation}
		R_k^{\text{MRT}}(\{\phi_j\}) = \log_2 \left( 1 + \frac{S_k \sigma_k^2}{I_k + \sigma_k^2} \right),
	\end{equation}
	where $S_k = \frac{P_{\max} N \beta_0}{d_{kk}^2 \sigma_k^2}$.
	Using this lower bound, the rate gap $\Delta R_k = R_k^{\text{up}} - R_k(\{\phi_j\})$ can be upper-bounded as
	\begin{align}\label{eq_app_gap_derive}
		\Delta R_k &\le R_k^{\text{up}} - R_k^{\text{MRT}}(\{\phi_j\}) \nonumber \\
		&= \log_2(1 + S_k) - \log_2 \left( 1 + \frac{S_k \sigma_k^2}{I_k + \sigma_k^2} \right) \nonumber \\
		&= \log_2 \left( \frac{\sigma_k^2 + S_k \sigma_k^2}{\sigma_k^2 + S_k \sigma_k^2 + I_k} \cdot \frac{I_k + \sigma_k^2}{\sigma_k^2} \right).
	\end{align}
	Since $I_k \ge 0$, the first term in the product is no greater than 1. Therefore, we have 
	\begin{equation}
		\Delta R_k \le \log_2 \left( \frac{I_k + \sigma_k^2}{\sigma_k^2} \right).
	\end{equation}
	Substituting the expression for $I_k$ into this inequality completes the proof.
\end{IEEEproof}

\subsubsection{Optimization of Rotation Angle}
Theorem \ref{theorem_rate_gap} directly reveals the optimization criterion for the antenna rotation angles. To closely approach the information-theoretic rate upper bound, the rotation angle $\phi_j$ at transmitter $j$ should be designed to minimize the total residual interference leakage across the network. Therefore, we define the total interference leakage caused by transmitter $j$ to non-associated receivers as
\begin{align}\label{eq_total_leakage_obj}
	I_j^{\text{total}}(\phi_j) = \sum\nolimits_{k \neq j} \frac{\rho_{kj}^2(\phi_j)}{d_{kj}^2}.
\end{align}
Accordingly, the optimization problem with respect to antenna rotation angles can be decoupled into $K$ independent single-variable sub-problems:
\begin{align}\label{eq_decoupled_phi}
	\min_{\phi_j} \quad & I_j^{\text{total}}(\phi_j) \nonumber \\
	\text{s.t.} \quad & 0 \le \phi_j < \pi, \quad \forall j \in \mathcal{J}.
\end{align}

Since this sub-problem involves only a single scalar variable $\phi_j$ over a bounded feasible region, each transmitter can construct this objective locally and perform a low-complexity fine one-dimensional (1D) grid search to obtain the optimal rotation angle $\phi_j^*$ in parallel.

\subsubsection{Optimization of transmit beamforming}
Once the optimized rotation angles $\{\phi_j^*\}_{j=1}^K$ are obtained, the antenna array orientations are fixed. At this stage, the effective MISO channel from transmitter $j$ to receiver $k$ reduces to a fixed vector $\mathbf{g}_{kj} = \mathbf{h}_{kj}(\phi_j^*)$.

Consequently, the original non-convex joint optimization problem reduces to a conventional MISO IFC beamforming problem with fixed antenna configuration. For a given sum-rate target $R>0$ and rate-profile weights $\zeta_k$, the transmit beamforming feasibility problem is formulated as
\begin{subequations}\label{eq_feasibility_problem}
	\begin{align}
		\text{Find: } & \{\mathbf{w}_j\}_{j=1}^K \nonumber \\
		\text{s.t. } \; & \frac{|\mathbf{g}_{kk}^H \mathbf{w}_k|^2}{2^{\zeta_k R} - 1} \ge \sum\nolimits_{j \neq k} |\mathbf{g}_{kj}^H \mathbf{w}_j|^2 + \sigma_k^2, \quad \forall k \in \mathcal{K}, \label{eq_feasibility_rate}\\
		& \|\mathbf{w}_j\|^2 \le P_{\max}, \quad \forall j \in \mathcal{J}.
	\end{align}
\end{subequations}
Note that without loss of optimality, a common phase shift can be applied to $\mathbf{w}_k$ such that $\mathbf{g}_{kk}^H \mathbf{w}_k$ is a real value for all $k$. Consequently, the non-convex rate constraints in \eqref{eq_feasibility_rate} can be equivalently recast as second-order cone (SOC) constraints:
\begin{equation}\label{eq_socp_constraint}
	\sqrt{\sum_{j \neq k} |\mathbf{g}_{kj}^H \mathbf{w}_j|^2 + \sigma_k^2} \le \frac{\operatorname{Re}(\mathbf{g}_{kk}^H \mathbf{w}_k)}{\sqrt{2^{\zeta_k R} - 1}}, \quad \forall k \in \mathcal{K}.
\end{equation}
By replacing \eqref{eq_feasibility_rate} with \eqref{eq_socp_constraint}, the feasibility problem becomes a standard SOCP, which is convex and can be efficiently solved by standard convex optimization solvers, such as CVX.

To find the optimal sum-rate $R^*$, we combine the SOCP solver with a bisection search over $R$. Solving this feasibility problem yields the optimal beamforming vectors $\{\mathbf{w}_j^*\}_{j=1}^K$ and the maximum sum-rate $R^*$.
The proposed algorithm consists of two steps: a 1D search for the rotation angles and SOCP-based beamforming optimization. For the first step, it requires solving $K$ parallel optimization problems, resulting in a complexity of $\mathcal{O}(KG)$ for all $K$ transmitters, where $G$ is the number of search grids. The complexity of solving \eqref{eq_feasibility_problem} is $\mathcal{O}(K^{3.5}N^{3.5})$. Therefore, the overall complexity of the proposed algorithm is $\mathcal{O}(KG + I K^{3.5}N^{3.5})$, where $I$ denotes the number of bisection iterations.

\section{Numerical Results}
In this section, we provide numerical results to validate the performance of the proposed design in the RA-assisted MISO IFC. The system parameters are set as follows: $f_c = 10$ GHz, $N = 4$, $\beta_0 = 10^{-4}$, and $\sigma_k^2 = -70$ dBm.
For comparison, we consider the following schemes: 
1) FPA with MRT: the conventional FPA system without rotation (i.e., $\phi_j = 0$) employs MRT beamforming; 
2) RA with MRT: the proposed RA system optimizes the rotation angles and adopts MRT beamforming; 
3) FPA with ZF: the FPA system employs zero-forcing (ZF) beamforming; 
4) RA with ZF: the RA system utilizes ZF beamforming; 
5) FPA with optimal beamforming: the FPA system obtains the optimal transmit beamforming via SOCP; 
6) RA with optimal beamforming: the RA system obtains the optimal transmit beamforming via SOCP; and 
7) Interference-free bound: the achievable rate upper bound assuming no interference from other transmitters.

\subsection{Two-receiver Scenario}
In the two-receiver scenario, transmitters Tx1 and Tx2 are located at $\left( {0,{\rm{ }}0} \right)$ and $\left( {100,{\rm{ }}0} \right)$ m, respectively, while receivers Rx1 and Rx2 are located at $\left( {0,{\rm{ }}120} \right)$ and $\left( {120,{\rm{ }}150} \right)$ m, respectively. 

\begin{figure*}[!t]
	\centering
	\begin{minipage}[t]{0.32\textwidth}
		\centering
		\includegraphics[width=\textwidth]{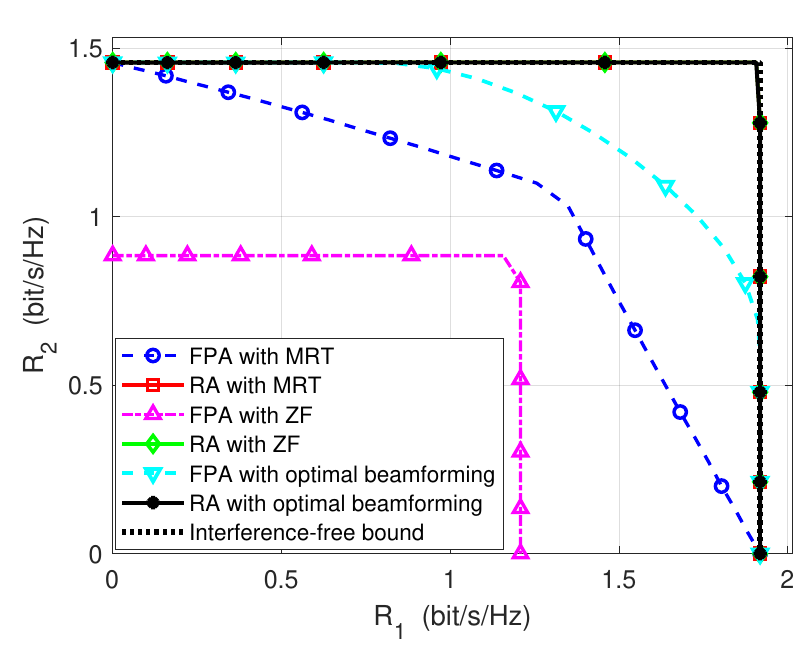}
		\caption{Achievable rate boundary under the low-SNR regime (${P_{\max }} = 10$ dBm).}
		\label{fig_low_snr}
	\end{minipage}
	\hfill
	\begin{minipage}[t]{0.32\textwidth}
		\centering
		\includegraphics[width=\textwidth]{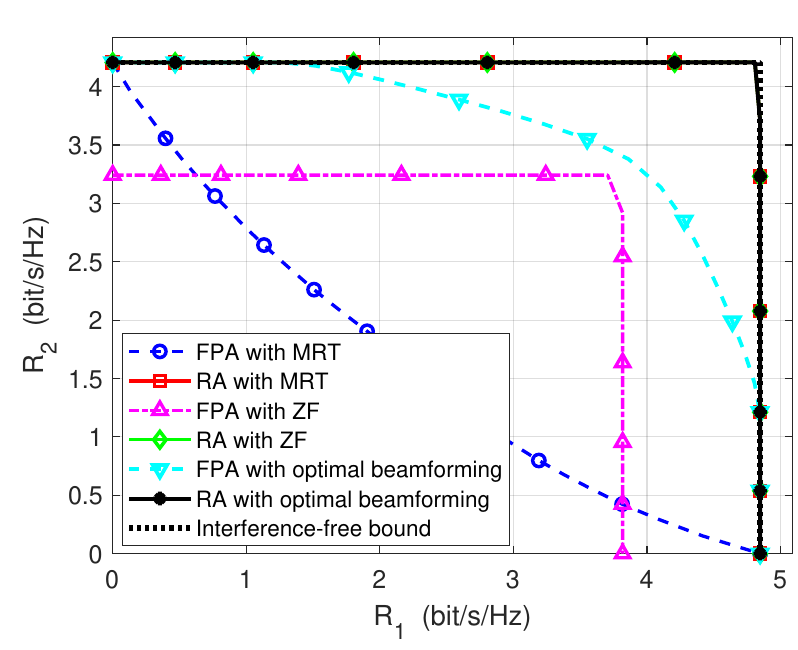}
		\caption{Achievable rate boundary under the high-SNR regime (${P_{\max }} = 20$ dBm).}
		\label{fig_high_snr}
	\end{minipage}
	\hfill
	\begin{minipage}[t]{0.32\textwidth}
		\centering
		\includegraphics[width=\textwidth]{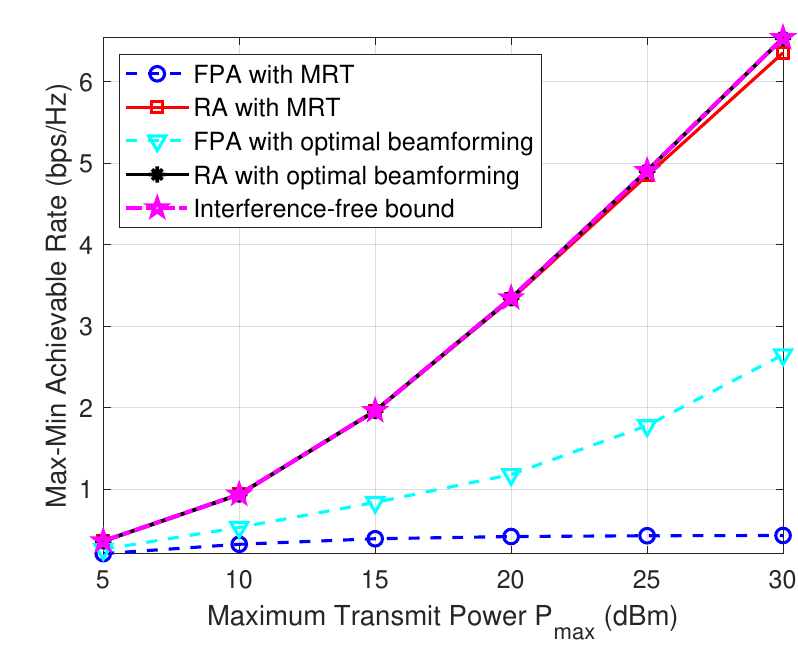}
		\caption{Max-min achievable rate versus the maximum transmit power $P_{\max}$ with $K = 4$.}
		\label{fig_multi_user}
	\end{minipage}
\end{figure*}

Fig. \ref{fig_low_snr} and Fig. \ref{fig_high_snr} illustrate the achievable rate regions in low-SNR (${P_{\max }} = 10$ dBm) and high-SNR (${P_{\max }} = 20$ dBm) regimes, respectively. For conventional FPA systems with $\phi_1 = \phi_2 = 0$, MRT performs slightly better in the low-SNR regime, whereas ZF beamforming yields a larger rate region in the high-SNR regime. By optimizing the RA rotation angles, the achievable rate region is significantly enlarged. Simulation results demonstrate that under both low and high SNR conditions, the RA-optimized MRT and ZF strategies achieve performance nearly identical to the interference-free upper bound. This validates that the additional spatial degree of freedom provided by RAs can not only significantly improve system performance, but also simplify transceiver design by enabling simple beamforming schemes with negligible performance loss.

\subsection{Multi-receiver Scenario}
Next, we evaluate the system performance of the general multi-receiver case by setting $K = 4$. The transmitters are located at $\left( {0,{\rm{ }}0} \right)$, $\left( {0,{\rm{ }}100} \right)$, $\left( {0,{\rm{ }}-1200} \right)$, and $\left( {0,{\rm{ }}-1300} \right)$ m, and the receivers are placed at $\left( {60,{\rm{ }}200} \right)$, $\left( {-60,{\rm{ }}200} \right)$, $\left( {60,{\rm{ }}-1200} \right)$, and $\left( {-60,{\rm{ }}-1300} \right)$ m.

Fig. \ref{fig_multi_user} plots the minimum achievable rate among all users versus the maximum transmit power $P_{\max}$. In this topology, due to the high spatial correlation among the channel vectors, the conventional FPA system experiences a severe performance loss compared to the interference-free upper bound. In contrast, by exploiting the new optimization dimension offered by RAs, the proposed design can closely approach the interference-free upper bound. As shown, the optimized RA combined with simple MRT beamforming achieves performance comparable to that of the RA with optimal beamforming, and substantially outperforms the conventional FPA network. This result indicates that deploying RAs can provide superior communication performance with a simplified transceiver design.

\section{Conclusion}
This paper investigated the potential of RAs for MISO IFC. We exploited the additional spatial degree of freedom provided by RA to create favorable wireless channel conditions, thereby enlarging the achievable rate region of the interference network. To this end, we jointly optimized the RA rotation angles and transmit beamforming to characterize the system achievable rate region. Numerical results demonstrated that the proposed RA-enabled MISO interference network significantly outperforms its conventional FPA counterpart. Furthermore, this highlights the significance of RAs in simplifying the design of transceiver transmission schemes.

\bibliographystyle{IEEEtran}

\end{document}